\newtheorem{theorem}{Theorem}
\newcommand{\keywords}[1]{\par\addvspace\baselineskip\noindent
                          {\bfseries Keywords:}\enspace\ignorespaces#1}
\begin{document}
%%%%%%%%%%%%%%%%%%%%%%%%%%%%%%%%%%%%%%%%%%%%%%%%%%%%%%%%%%%%%%%%%%%%%%%%

%%%%%%%%%%%%%%%%%%%%%%%%%%%%%%%%%%%%%%%%%%%%%%%%%%%%%%%%%%%%%%%%%%%%%%%%
% Header Setup:
\date{}

%Needed for Running titles:
\pagestyle{fancy}
\addtolength{\headheight}{6pt}

% Full title:
\title{Communication Network Design:\\
       Balancing Modularity and Mixing via \\
       Optimal Graph Spectra}

% Title or shortened title suitable for running heads:
\rhead{\bfseries Communication Network Design}

% Full author block:
\author{
\large
\textsc{Benjamin Lubin,$^{a}$ Jesse Shore$^{a}$, and Vatche Ishakian$^{b}$}\\[2mm]
\normalsize 
$^{a}$Boston University School of Management \hspace{.5em} 
$^{b}$BBN Technologies\\
}

% Author's names for the running heads
\lhead{\itshape Lubin, Shore, and Ishakian}   

%%%%%%%%%%%%%%%%%%%%%%%%%%%%%%%%%%%%%%%%%%%%%%%%%%%%%%%%%%%%%%%%%%%%%%%%
% Header Content
\twocolumn[
\vspace{-1cm}
\maketitle

\vspace{-1cm}

\begin{onecolabstract}
%%%%%%%%%%%%%%%%%%%%%%%%%%%%%%%%%%%%%%%%%%%%%%%%%%%%%%%%%%%%%%%%%%%%%%%%%%%%
%% Content Copyright (c) 2013 Benjamin Lubin, Jesse Shore, Vatche Ishakian
%% All Rights Reserved
%%%%%%%%%%%%%%%%%%%%%%%%%%%%%%%%%%%%%%%%%%%%%%%%%%%%%%%%%%%%%%%%%%%%%%%%%%%%
%

\noindent By leveraging information technologies, organizations now
have the ability to design their communication networks and
crowdsourcing platforms to pursue various performance goals, but
existing research on network design does not account for the specific
features of social networks, such as the notion of teams.
We fill this gap by demonstrating how desirable aspects of
organizational structure can be mapped parsimoniously onto the
spectrum of the graph Laplacian allowing the specification of
structural objectives and build on recent advances in non-convex
programming to optimize them.
This design framework is general, but we focus here on the problem of
creating graphs that balance high modularity and low mixing time, and
show how ``liaisons'' rather than brokers maximize this objective.

\end{onecolabstract}

\keywords{Optimal Networks, Network Design, Social Networks,
  Organization Design, Spectral Graph Theory, Modularity, 
  Semi-Definite Programming, Human Computation, Crowdsourcing}

\vspace{1cm}
]

%%%%%%%%%%%%%%%%%%%%%%%%%%%%%%%%%%%%%%%%%%%%%%%%%%%%%%%%%%%%%%%%%%%%%%%%

%%%%%%%%%%%%%%%%%%%%%%%%%%%%%%%%%%%%%%%%%%%%%%%%%%%%%%%%%%%%%%%%%%%%%%%%%%%%
%% Content Copyright (c) 2013 Benjamin Lubin, Jesse Shore, Vatche Ishakian
%% All Rights Reserved
%%%%%%%%%%%%%%%%%%%%%%%%%%%%%%%%%%%%%%%%%%%%%%%%%%%%%%%%%%%%%%%%%%%%%%%%%%%%
\section{Introduction}
\label{sec:intro}

As organizations move rapidly into the world of crowdsourcing,
external innovation and human computation \citep{aral2013introduction,
  boudreau2010open,boudreau2011incentives,di2010getting,Guinan2013exp,
  ipeirotis2010quality, kearns2012experiments,kohler2011co,
  lakhani2007r,lakhani2007principles,von2009human,zeng2013social,
  zheng2011task}, they are creating platforms that govern the
relationships among their problem solvers.  When individuals are
connected on such platforms, they gain the ability to observe,
communicate, and collaborate with each other, forming communication
networks.  The structure of these networks is very important, being
well known to influence knowledge management \citep{alavi2001review},
knowledge sharing
\citep{aral2011diversity,borgatti2003relational,mcevily2005embedded},
cooperation and coordination
\citep{huang2011critical,mccubbins2009connected,suri2011cooperation},
innovation
\citep{bae2011cross,capaldo2007network,reagans2001networks}, the
balance between exploration and exploitation
\citep{lazer2007network,mason2008propagation, mason2012collaborative,
  shore2013oswc}, and overall problem-solving performance
\citep{bavelas1950communication,cummings2003structural,sparrowe2001social}.
By leveraging modern information and communications technology, there
is now the opportunity for organizations to go beyond
\emph{understanding} their networks to actually \emph{designing} their
networks.

There are undoubtedly roles for both emergent and designed networks,
but there are reasons to doubt whether emergent networks, created as
individuals pursue their own goals, are optimal in the aggregate for
the whole organization or collective.  For example, if all actors seek
novel information by forming bridging ties, this could quickly drive
out diversity of information at the collective level as macroscopic
structural divisions in a network disappear \citep{gulati2012rise}.
Additionally, in resource networks, if individuals keep only their
strongest ties, it can paradoxically make the whole system very weak
\citep{shore2013power}.  In other words, networks can be subject to
social dilemmas in which individual and collective interests are at
odds.  In these cases, design of network structure is an attractive
option for the organizations that control them via digital platforms.

Unfortunately, however, there is little research specifically
addressing the design of networks of communicating human beings that
might guide the way. Rather, the design literature has focused on
problems of minimal cost or optimally ``efficient'' networks, with
applications in non-human infrastructure settings
\citep{balakrishnan1989dual,dionne1979exact,donetti2005entangled,
  estrada2007graphs,guimera2002optimal,kershenbaum1991mentor,
  lubotzky1988ramanujan,magnanti1984network,minoux1989networks,
  winter1987steiner}.  The work of \cite{lovejoy2010efficient} is a
notable exception in that it is concerned with social networks within
organizations, but it is similar in its orientation toward efficiency
and short paths between any given pair of individuals in a network.
There is indeed substantial theoretical justification for targeting
short paths as a design criterion in human as well as infrastructural
networks that is generally understood in terms of two related ideas:
that weak-ties enable rapid diffusion of information
\citep{watts1998small} and that bridging structural holes can be
associated with innovation \citep{burt2004structural}.

Although these are important issues, there are also advantages to
modularity -- having teams or groups in organizations that are
relatively separate but internally cohesive -- but this has to our
knowledge been omitted as a network design criterion.  Within
organizations, internally cohesive groups tend to use similar language
constructs, which enables high-bandwidth communication
\citep{aral2011diversity} and increases their effectiveness
\citep{hansen1999search,reagans2001networks}, especially for problems
that require extensive information-space searching or coordination
\citep{shore2013oswc}. Additionally, certain types of information and
behaviors spread more easily within rather than between clusters of
connected individuals \citep{centola2010spread}. Finally, real
organizations are usually structured in divisions, work groups, or
teams --- lending an added importance to incorporating some notion of
modularity into network design work.  Despite all of this, the design
literature has yet to address network contexts in which modularity is
desirable, keeping design off of the table for most applications to
human organizations.

Two major issues may have stood in the way of incorporating modularity
into design work.  First, obtaining modularity and short path lengths
imply quite different network structures, making theoretical analysis
that encompasses both properties difficult. Second, the space of all
possible networks is combinatorially large, making the design problem
formidably complex (for example, the number of possible undirected
graphs with 16 nodes is $2^{120}$, or approximately $1.3 \times
10^{36}$ --- far too many to evaluate individually by any known
means).

Here, we propose a design framework that addresses both issues
simultaneously: we frame the network design problem in a way that lets
the designer trade off between modularity and mixing time, and we
propose an algorithm that can find optimal or near-optimal graphs
under these criteria. Specifically, for the design framework, we take
advantage of prior literature in the area of spectral graph theory and
demonstrate how desirable aspects of organizational structure can be
mapped parsimoniously onto the spectrum of the graph Laplacian
\citep{chung1997spectral} derived from a matrix representation of that
communication structure.  Recent advances in convex and non-convex
optimization allow us to capture these spectral elements in an
objective function to be optimized.  We go on to present examples of
the communications structures produced under this method that balance
modularity and mixing time and discuss the implications of their
properties.  Specifically, rather than structural ``brokerage,'' we
find that networks with a ``liaison'' structure offer more modularity
for a given mixing velocity.  Beyond our specific results, however,
our more general contribution is to provide a framework for both
specifying and solving principled problems in network design.

%%%%%%%%%%%%%%%%%%%%%%%%%%%%%%%%%%%%%%%%%%%%%%%%%%%%%%%%%%%%%%%%%%%%%%%%%%%%
%% Content Copyright (c) 2013 Benjamin Lubin, Jesse Shore, Vatche Ishakian
%% All Rights Reserved
%%%%%%%%%%%%%%%%%%%%%%%%%%%%%%%%%%%%%%%%%%%%%%%%%%%%%%%%%%%%%%%%%%%%%%%%%%%%

\section{Spectral Theory Informs Design}
\label{sec:theory}

Spectral graph theory \citep{chung1997spectral,cvetkovic1980spectra}
is concerned with the relationships between the structure of a network
and the eigenvalues, also called the ``spectrum,'' of the matrix
representation of that network.
One major advantage of thinking of networks in terms of their spectra
is that spectra are insensitive to permutations and labeling.  All
networks with the same structure have the same spectrum. This property
lets us avoid having to deal with the so-called ``graph isomorphism
problem,'' where many equivalent representations for structurally
isomorphic graphs exist, making search and classification in graph
space difficult.
In essence, working with spectra lets us focus on more tractable and
compact objects, which correspond to unique graphs with high
probability (see section~\ref{sec:cospectral}).
Moreover, the values of the spectrum provide enormously useful
information about graph structure in a compact and accessible way.
These properties make spectra ideal mathematical objects to use in
formalizing desiderata and constraints in network design problems.

In this paper, we adopt a particular design objective: we aim to
design networks that both manifest distinct subgroups and yet are
still ``sufficiently connected.''  As we have seen in the previous
section, these are well motivated goals.  However, it is not obvious
how to formalize them.  Spectral theory gives us a means to frame this
precisely.  Existing work has not examined such an objective; we
provide:

\begin{itemize}
%\singlespacing
\item A spectral formalization of our modularity and mixing objective
  (section~\ref{sec:laplacianspectrum})
\item A novel optimization problem based on this formulation that
  captures our design objective (section~\ref{sec:formulation})
\item An algorithm for approximately (and often optimally) solving
  this problem (section~\ref{sec:algorithm})
\item A set of numerical experiments based on this algorithm, and
  their results and interpretation (sections~\ref{sec:results}
  and~\ref{sec:discussion}).
\end{itemize}

\subsection{Preliminaries}

The standard matrix representation of a graph, where each entry
represents the strength of the connection between the node indexed by
the matrix row and column, is called the \textit{adjacency} matrix.
In this paper, we assume that each individual in the organization has
equal capacity to communicate that they use fully.  This implies that
our matrix representations of the network must have rows and columns
that can be normalized so that they all sum to 1; such matrices are
called ``doubly stochastic.''  Further, we assume that a given
communication tie takes the same proportion of each connected
individual's communication capacity.\footnote{Formally, this property
  holds that the rows are scalar multiples of the columns.}  Together,
these properties imply that the matrix representation of the network
must be symmetric about its diagonal.

\begin{figure*}[htb!]
  \centering
    %\includegraphics[width=0.60\textwidth]{../vis/framework.eps}
    %Use PDF version instead:
    \includegraphics[width=0.60\textwidth]{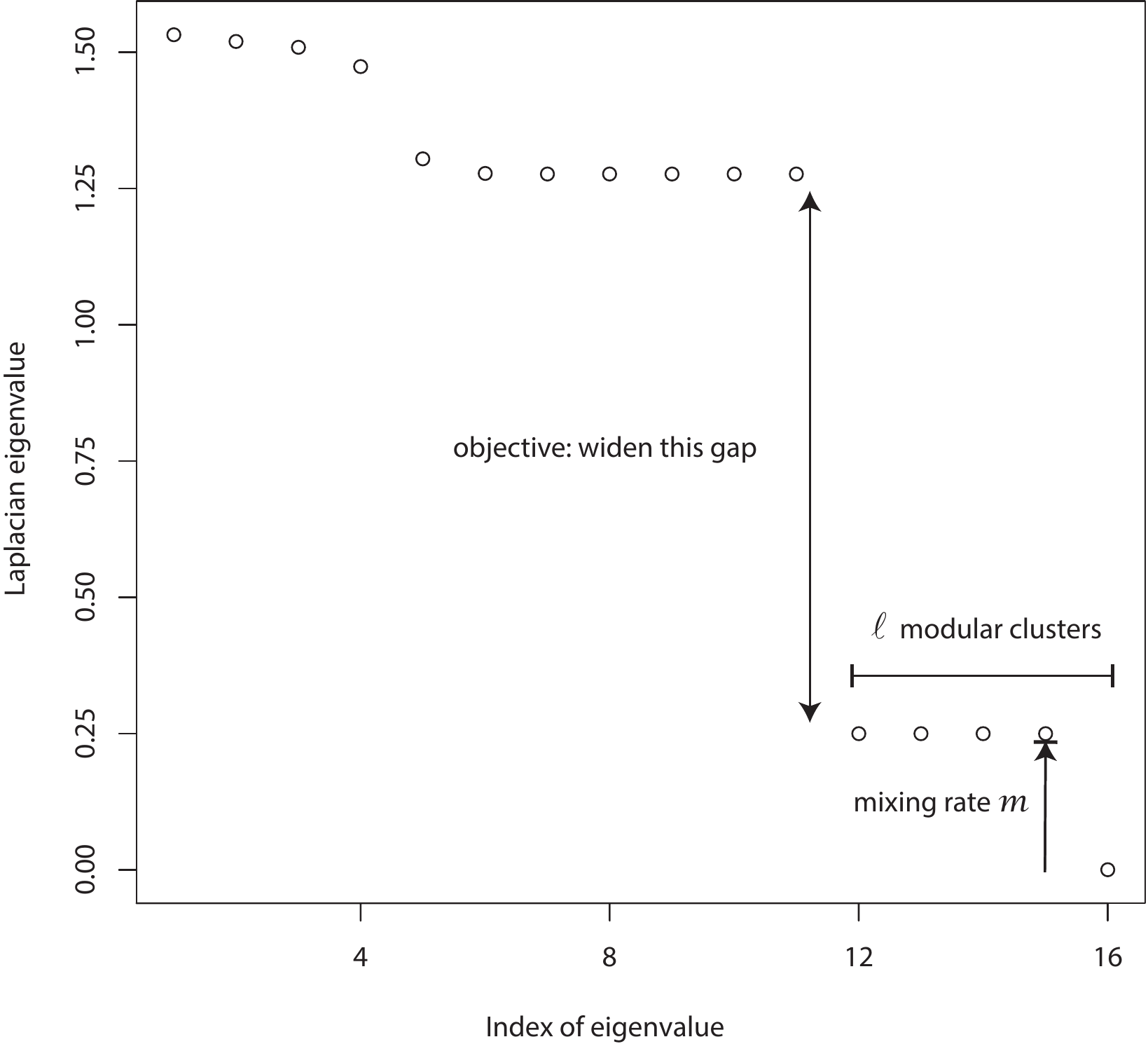}
  \caption{Illustration of the spectral framework, including objective
    and constraints}
  \label{fig:framework}
\end{figure*}

Instead of working with the adjacency matrix, it can be useful to work
with the graph \textit{Laplacian} matrix given, for stochastic graphs,
by $L = I - A$, where $I$ is the identity matrix and $A$ the adjacency
matrix.\footnote{In general the \textit{Laplacian} is given by $L = D
  - A$, where $D$ is the degree matrix, constructed by putting the row
  sums of $A$ on the diagonal, with zeros elsewhere.}
The spectrum of the adjacency and Laplacian matrices are related but
have distinct properties; those of the Laplacian match our needs, and
consequently we adopt it here.  The matrix spectrum is simply the
multiset of eigenvalues, sorted in decreasing order of magnitude.%
\footnote{ The eigenvalues of a matrix $M$ are given by $\{\lambda |
  M\bv{v}=\lambda \bv{v}, \bv{v} \neq 0\}$.  The $\bv{v}$ are called
  the eigenvectors of the matrix: those vectors that when multiplied
  by the matrix yield a scaled copy of themselves.  Each scale factor
  is a corresponding eigenvalue, $\lambda$.}
Such a spectrum can be plotted as a set of points, as illustrated in
Figure~\ref{fig:framework} and elaborated upon below.

\subsection{The Laplacian Spectrum and Network Structure}
\label{sec:laplacianspectrum}

The relative magnitude of the various spectral values correspond to
specific structural properties of the corresponding network.  We
describe those necessary for capturing our design objective below.

\subsubsection{Bounding Mixing Time with $\boldsymbol{m}$}

%\blnote{Do we want to talk about the other graph properties that
%  $\lambda_2$ captures?}  \jsnote{such as what?  I think we want to
%  try to carve one clear path through the theory, so I think probably
%  no.}

The magnitude of the smallest Laplacian eigenvalue (hereafter, just
``eigenvalue'' for brevity) is always zero, and therefore of little
immediate interest.  However, the magnitude of the second smallest
eigenvalue is also the graph's ``algebraic connectivity''
\citep{fiedler1973algebraic} and is inversely related to the mixing
time for Markov chains \citep{mohar1997some}.  In short, the larger
the second smallest eigenvalue, the faster we expect information to
diffuse through the network \citep{donetti2006optimal}.  Because of
its known connection to mixing time, we refer to the magnitude of the
second smallest eigenvalue as $m$ (see Figure~\ref{fig:framework}).
By tuning $m$, a network designer has a spectral method for
formalizing the idea of ``sufficiently connected:'' the larger the
$m$, the more rapidly that communication structure is expected to
diffuse information.  However, raising $m$ may come at the cost of
other desirable features, such as the amount of modularity that is
manifest in the network, as we shall see shortly.

\subsubsection{Setting the Number of Modular Clusters with 
$\boldsymbol{\ell}$}

It is well known that the number of connected components of an
undirected graph is equal to the number eigenvalues of the Laplacian
that are equal to zero \citep{brouwer2011spectra}.  For example, if
there were four totally disconnected components, there would be four
eigenvalues equal to zero.  If, however, there existed weak
connections among those distinct communities such that they are no
longer disconnected components but rather modular clusters, then
rather than having one zero for each cluster, we would have one small
eigenvalue for each module \citep{donetti2006optimal}.  Consequently,
for a graph consisting of four modular clusters that are weakly
connected to each other, the spectrum of the Laplacian (hereafter
``spectrum'') would contain four small eigenvalues, one of which would
be zero (as there would be one component, and thus one eigenvalue
equal to zero).

From the design point of view, then, we observe that if one desires a
communication network with some number, $ \ell $, distinct modular
clusters, then one should construct a graph with a spectrum containing
$ \ell $ small eigenvalues, one of which is zero (see Figure
\ref{fig:framework}).

%\blnote{Do we want to talk about the Cheeger constant here?  Ramanujan
%  graphs?} \jsnote{no, I think we had better not.}

\subsubsection{Maximizing Modularity with the Rest of the Spectrum}

We have just argued that to generate $\ell$ modular clusters we want
$\ell$ small eigenvalues because a structurally separate portion of
the graph will be established for each small eigenvalue present in the
spectrum.  By the same logic, if we want the next most significant
structural division to be minimized, we should seek to make the
$\ell+1^{th}$ eigenvalue as large as possible.

We then consider the effect of raising the $\ell+1^{th}$ eigenvalue on
the balance of the spectrum.  We note that the sum of all of the
eigenvalues is constrained to equal to $n$
\citep[][p. 6]{chung1997spectral}\footnote{As long as there are no
  isolated vertices, which will be the case for us} and consequently:
\begin{equation}
\label{eq:spectralSums}
\sum_{\ell < k \leq n} \lambda_k = n - \sum_{1 \leq k \leq \ell} \lambda_k
\end{equation}
Further, suppose we hold the $\ell$ smallest eigenvalues constant, as
will happen when they are against their $m$ lower bound.  Then the
right side of equation~\ref{eq:spectralSums} will be a known constant.
In general, driving up the smallest of a set of $n$ numbers whose sum
is constrained at some number $c$, will push these numbers to each
take the value $\nicefrac{c}{n}$ (see Theorem~\ref{th:optBound} below
for more detail).
Such constant spectra correspond to homogeneously connected
networks \citep[][p. 6]{chung1997spectral}.  Thus, because our
maximization will tend to equalize the large eigenvalues, we are
minimizing any additional remaining structural divisions.

A theorem provided by Newman and Kel'mans enables us to add an
additional intuition as to why this is true
\citep{newman200laplacian}:
\begin{equation}
\lambda_k(G^C) = n - \lambda_{n+2-k}(G) \textrm{ for } 2 \leq k \leq n
\end{equation}
where $G$ is a graph and $G^C$ is its complement.\footnote{Informally,
  the complement has weight wherever the primary graph does not and
  vice versa.}  This theorem provides that the $k^{th}$ largest
eigenvalue is equivalent to the $k-1^{th}$ smallest eigenvalue of the
complementary graph.
So, by driving down the largest eigenvalue in the primary graph,
$\lambda_n$, we are driving up the smallest non-zero eigenvalue,
$\lambda_2$, of the complementary graph, increasing its mixing rate.
This then, in the primary graph is equivalent to increasing modularity
by reducing the between-module weight as much as possible.

Thus in sum, the larger the $\ell+1^{th}$ eigenvalue, the more modular
the resulting graph, holding the first $\ell$ eigenvalues constant.

\subsection{Co-Spectral Graphs}
\label{sec:cospectral}
It is one thing to calculate the spectrum of a known graph and quite
another to construct a graph with a given spectrum.  We are more
concerned with the latter problem.  The next section details our
method for constructing matrices with desirable spectral properties.
Before we do so, however, we must take note of the issue of
co-spectral graphs, or non-isomorphic graphs with the same spectrum
\citep{godsil1982constructing,harary1971cospectral}.

Although at present relatively little is known about which graphs have
co-spectral partners \citep{van2003graphs}, we do not believe this
presents an impediment to the present undertaking.  Most
fundamentally, we are presenting a framework for designing
communication networks with properties that have spectral correlates.
If by chance we construct a graph for which there exists a co-spectral
partner that we do not find, we will have still achieved our design
goal, because co-spectral graphs have similar structure with respect
to the features captured by that spectrum.

Additionally, but less essentially, enumerations of unweighted graphs
that are co-spectral with respect to their Laplacian
\citep{brouwer2009cospectral,cvetkovic2012spectral,haemers2004enumeration}
show that the proportion of graphs with co-spectral partners is
highest at $n=9$ and decreases as $n$ and the number of edges
increase. \cite{halbeisen2000reconstruction} show that for weighted
graphs --- which we employ here --- there are almost surely no
co-spectral partners.  Therefore, we assert that by constructing
weighted networks according to spectral parameters, we are not leaving
anything important to our aims on the table.

%%%%%%%%%%%%%%%%%%%%%%%%%%%%%%%%%%%%%%%%%%%%%%%%%%%%%%%%%%%%%%%%%%%%%%%%%%%%
%% Content Copyright (c) 2013 Benjamin Lubin, Jesse Shore, Vatche Ishakian
%% All Rights Reserved
%%%%%%%%%%%%%%%%%%%%%%%%%%%%%%%%%%%%%%%%%%%%%%%%%%%%%%%%%%%%%%%%%%%%%%%%%%%%

\section{Methods}

%\subsection{Optimization Formulation}
\label{sec:formulation}

Spectral theory has given us the means to formalize both of our design
objectives:
\begin{itemize}
%\singlespacing
\item Sufficient connectivity, by imposing a lower bound, $m$, on the
  second smallest eigenvalue $\lambda_2$, which ensures a fast enough
  mixing time.
\item Modularity with $\ell$ clusters, by having $\ell$ small
  eigenvalues and $n-\ell$ large eigenvalues.
\end{itemize}
Our network design problem can then be cast as the following
non-linear optimization problem:
\begin{align}
\max_{\bm{W}} \quad \lambda_{\ell+1}(\bm{W})& -\lambda_{\ell}(\bm{W}) 
                                                          \label{eq:obj}\\
\textrm{s.t.} \quad\quad\quad\quad\; \lambda_2 & \geq m   \label{eq:mConst}\\
              \sum_j \bm{W}_{ij} & = 1 \; \forall \; i \label{eq:stochastic} \\
              \bm{W}_{ij} & = \bm{W}_{ji} \; \forall \; i,j \label{eq:symmetric}
\end{align}
\noindent where $\lambda_k(\bm{W})$ is the $k^{th}$ eigenvalue of
the matrix $\bm{W}$.
The objective, equation~\ref{eq:obj}, maximizes the difference between
the $\ell+1^{th}$ and $\ell^{th}$ Laplacian eigenvalue.
Constraint~\ref{eq:mConst} ensures that the mixing rate is at least
$m$.  Constraints~\ref{eq:stochastic} and~\ref{eq:symmetric} ensure
stochasticity and symmetry respectively.  Note that the variables in
this formulation are the weights of matrix $\bm{W}$.

\subsection{ Optimization Algorithm }
\label{sec:algorithm}
%* Semi-Definite Programming
%* DC Programming

The ``eigenvalue problem,'' that of computing the eigenvalues of a
known matrix, can be calculated in closed form for small matrices, and
for large matrices by numerical algorithms, for example QR, that have
been known since the early sixties
\citep{francis1961qr,francis1962qr}.  However, the ``inverse
eigenvalue problem,'' that of finding the graph that corresponds to a
specific spectrum or to specific spectral characteristics has proven
vastly harder to solve \citep{chu1998inverse}.  Most such problems
admit no computationally tractable algorithm for obtaining a globally
optimal solution.

Our formulation falls within this hard class, and thus the best we can
hope for is a high-quality approximation algorithm.  We are not aware
of any existing work that has looked at solving our particular
spectral objective and constraints.  We have therefore constructed our
own solution method by leveraging recent advances in Semi-Definite
Programming (SDP) and Difference in Convex (DC) programming, which we
describe next.

\subsubsection{Semi-Definite Programming}

SDP is a type of convex optimization that operates over a matrix
variable, instead of the scalar variables seen in other convex
optimization methods \citep{vandenberghe1996semidefinite}.  SDP
objectives are specified as the inner product of the matrix variable,
with a user-specified constant matrix.  Similarly, SDP constraints
consist of a bound on the inner product between the matrix variable
and another user-specified constant matrix.  The minimal value for the
objective is found, where the matrix variable is drawn from the cone
of semi-definite matrices.  Many problems can be cast into this
structure, and because the resulting formulation is convex, it can be
solved efficiently by, for example, interior point methods
\citep{alizadeh1995interior,todd2001semidefinite,wolkowicz2000handbook}.

For the present work, the key property of SDP is its ability to
capture the sum of the $k$ smallest Laplacian eigenvalues,
$S_k=\sum_{1\leq i \leq k} \lambda_i$, as a concave function of the
matrix weights, the maximization of which is a convex optimization.
\cite{boyd2006convex} and colleagues use this capability to solve
certain Laplacian inverse eigenvalue problems directly
\citep{boyd2004fastest,boyd2006convex}.  For example, they formulate
$S_2$ as a concave function which they maximize via SDP to efficiently
solve for the matrix corresponding to the Markov process with the
fastest mixing time.  We leverage their result by moving their
objective formulation to a constraint, obtaining a convex form for
equation~\ref{eq:mConst}.  Further, as the remaining constraints are
linear, only our objective~\ref{eq:obj} fails to be directly
representable as an SDP, which we address next.

We start by noting that $\lambda_\ell = S_\ell - S_{\ell-1}$ and
$\lambda_{\ell+1} = S_{\ell+1} - S_{\ell}$.  And thus our objective in
equation~\ref{eq:obj} can be rewritten as:
\begin{equation}
\begin{split}
\lambda_{\ell+1} - \lambda_\ell 
 &= (S_{\ell+1} - S_\ell) - (S_\ell - S_{\ell-1}) \\
 &= S_{\ell+1} + S_{\ell-1} - 2S_\ell \label{eq:summationForm}
\end{split}
\end{equation}
This objective captures our intent, and can be formulated by known
SDP-style expressions.  However, it can not be directly solved
because, when treated as a maximization and not a minimization, the
third term is non-convex.

\subsubsection{Difference in Convex Programming}

As we have seen, the formulation of equation~\ref{eq:obj} given in
equation~\ref{eq:summationForm}, is almost convex and solvable as an
SDP, but not quite.
Consequently, we are not going to be able to directly use convex
optimization, and the best we can hope for is an approximately optimal
algorithm.  However, equation~\ref{eq:summationForm} is a
\textit{difference of convex functions} and, as such, is amenable to
an algorithm known as the Concave-Convex Procedure
\citep{yuille2002concave,yuille2003concave}.  This is an iterative
method for obtaining approximate solutions to problems with convex and
concave components in the objective that has good convergence
properties \citep{sriperumbudur2009convergence}.  Our approach is to
implement the Concave-Convex Procedure over our SDP
formulation.\footnote{The Concave-Convex Procedure has generally been
  used for simpler optimization formalisms in the literature, here we
  adapt it to the more expressive SDP context.}  Our approach is as
follows:

We start with a random initial graph $\widehat{\bm{W}}$.  We then form a
first-order Taylor expansion of the concave portion of the objective
around $\widehat{\bm{W}}$.  Using this linear form, we can then
approximate the objective as:
\begin{multline}
  S_{\ell-1}(\bm{W}) + S_{\ell+1}(\bm{W}) - \\ 
  2 \left(  
  S_\ell(\widehat{\bm{W}}) + 
  \nabla S_\ell(\widehat{\bm{W}}) \cdot (\bm{W} - \widehat{\bm{W}}) 
  \right)
\end{multline}
This then, is directly representable as an SDP, which we solve using
the CVX package \citep{cvx,gb08}.  We then set $\widehat{\bm{W}}
\leftarrow \bm{W}$ and repeat until convergence.

\subsection{Bounding the Objective Value}
\label{sec:optBound}

%%% Moved from subsequent section so it would layout earlier:
\begin{figure*}[tbh!]
	\centering
        \includegraphics[width=\textwidth]{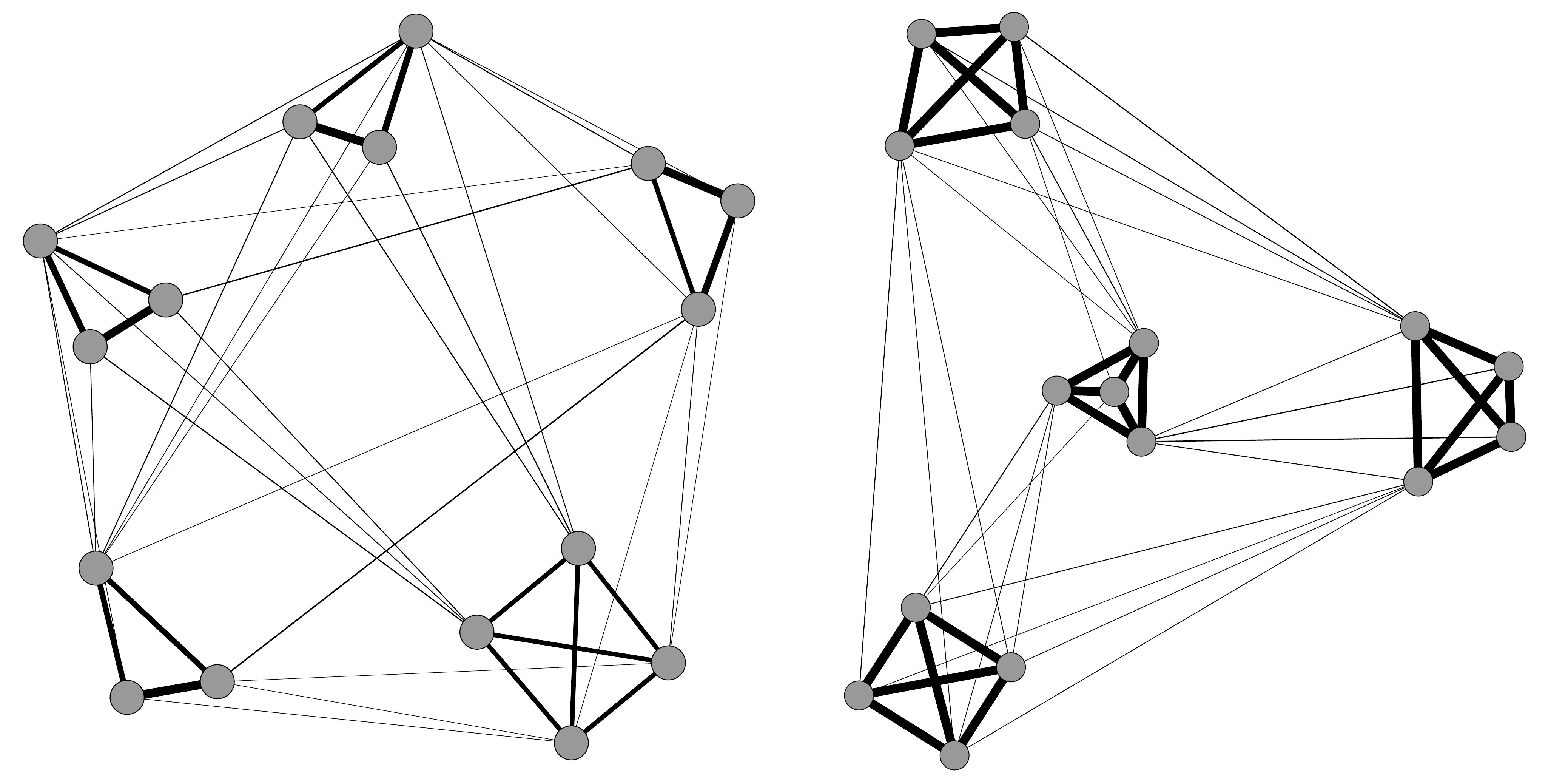}
        \caption{Two examples of sixteen-person
          communication networks produced by the spectral design
          framework.  The left hand network has \( \ell = 5 \) and the
          right hand \( \ell = 4 \) teams, both with mixing rate \( m
          = 0.25\).}
	\label{fig:clearExamples}
\end{figure*}

Because the solution found by our optimization algorithm may be only
locally optimal, it is useful to have a theoretical upper bound on the
objective value in equation~\ref{eq:obj}.  When the objective value of
the solution found by our numerical calculations approaches the bound,
we have found an approximately optimal graph.\footnote{However, the
  converse does not necessarily follow: solutions far from the bound
  may still be near-optimal when the bound is loose.} Accordingly, we
can take advantage of the following:
\begin{theorem}
\label{th:optBound}
  $\frac{n-m(\ell-1)}{n-\ell}-m$ gives an upper bound on the
  non-convex objective in equation~\ref{eq:obj}.
\end{theorem}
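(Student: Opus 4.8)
The plan is to discard every structural fact about $\bm{W}$ except the three numerical constraints its Laplacian spectrum must obey, and then solve the resulting eigenvalue-level optimization. Order the eigenvalues increasingly, $0=\lambda_1\le\lambda_2\le\cdots\le\lambda_n$ (the convention implicit in constraint~\ref{eq:mConst} and objective~\ref{eq:obj}), and assume, as in the design framework, that $2\le\ell<n$. First I would record that the mixing constraint~\ref{eq:mConst} gives $\lambda_2\ge m$, and since the sequence is nondecreasing this propagates to $\lambda_k\ge m$ for every $k\ge 2$; and that the trace identity underlying equation~\ref{eq:spectralSums} gives $\sum_{k=1}^{n}\lambda_k=n$. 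Any bound on $\lambda_{\ell+1}-\lambda_\ell$ valid for all tuples meeting these conditions is then automatically a bound on the true objective.

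The argument itself reduces to two one-line estimates. For the subtracted term, $\ell\ge 2$ gives immediately $\lambda_\ell\ge\lambda_2\ge m$. For the leading term, split the trace exactly as in equation~\ref{eq:spectralSums}: $\sum_{k>\ell}\lambda_k = n-\sum_{k\le\ell}\lambda_k$, and bound the removed block below using $\lambda_1=0$ together with the $\ell-1$ inequalities $\lambda_2,\dots,\lambda_\ell\ge m$, so that $\sum_{k\le\ell}\lambda_k\ge m(\ell-1)$ and hence $\sum_{k>\ell}\lambda_k\le n-m(\ell-1)$. Since $\lambda_{\ell+1}$ is the \emph{smallest} of the $n-\ell$ numbers $\lambda_{\ell+1}\le\cdots\le\lambda_n$, it is at most their average, whence
\[
  \lambda_{\ell+1}\ \le\ \frac{1}{\,n-\ell\,}\sum_{k>\ell}\lambda_k\ \le\ \frac{n-m(\ell-1)}{\,n-\ell\,}.
\]
Subtracting the two estimates yields $\lambda_{\ell+1}-\lambda_\ell\le\frac{n-m(\ell-1)}{n-\ell}-m$, which is the claimed bound.

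There is essentially no deep obstacle here; the only point meriting care is that the two estimates are applied independently — $\lambda_\ell$ is minimized and $\lambda_{\ell+1}$ is maximized without exhibiting a single spectrum that does both at once — which is harmless for an \emph{upper} bound because each inequality holds separately for every admissible spectrum. To argue the bound is not vacuously loose I would note that, in this relaxed eigenvalue problem, it is attained by the flat configuration $\lambda_2=\cdots=\lambda_\ell=m$ and $\lambda_{\ell+1}=\cdots=\lambda_n=\frac{n-m(\ell-1)}{n-\ell}$ — precisely the homogeneously-connected pattern discussed around equation~\ref{eq:spectralSums}; whether that tuple is realized by an actual symmetric doubly stochastic matrix is a separate question, and its possible non-realizability is exactly why the numerically obtained graphs can fall below the bound. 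The remaining work is pure bookkeeping: getting the count $\ell-1$ of forced above-$m$ eigenvalues in the lower block right (there are $\ell$ of them, one being the mandatory zero), and carrying the standing hypotheses $\ell\ge 2$ and $\ell<n$ without which $\lambda_\ell\ge m$ or the division by $n-\ell$ would be unjustified.
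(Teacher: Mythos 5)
Your proposal is correct and follows essentially the same argument as the paper: use $\lambda_1=0$ and $\lambda_2,\dots,\lambda_\ell\ge m$ to get $\sum_{k\le\ell}\lambda_k\ge m(\ell-1)$, combine with the trace bound $\sum_k\lambda_k\le n$ to bound the tail sum, bound $\lambda_{\ell+1}$ by the average of the $n-\ell$ largest eigenvalues, and subtract the lower bound $\lambda_\ell\ge m$. Your write-up merely makes explicit the averaging step and the independence of the two estimates, which the paper states more tersely.
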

\begin{proof}$\lambda_1=0$ always and $\lambda_k \geq m$ for
$2 \leq k \leq \ell$ by constraint~\ref{eq:mConst}, a lower bound on
each of these eigenvalues.  This implies $\sum_{1 \leq k \leq \ell}
\lambda_k \geq m(\ell-1)$.  There is a known result that $\sum_{k}
\lambda_k \leq n$ \citep{chung1997spectral}.  Subtracting the first
from the second yields $\sum_{\ell+1 \leq k \leq n} \lambda_k \leq n -
m(\ell-1)$, an upper bound on the large eigenvalues.
The smallest of these, $\lambda_{\ell+1}$, is made maximal at this
bound and when these eigenvalues are of equal size, giving it a value
of $\frac{n - m(\ell-1)}{n-\ell}$.
Subtracting our upper bound on $\lambda_{\ell+1}$ from our lower bound
on $\lambda_\ell$ gives an overall objective upper bound of:
$\frac{n-m(\ell-1)}{n-\ell}-m$.%
\end{proof}

%%%%%%%%%%%%%%%%%%%%%%%%%%%%%%%%%%%%%%%%%%%%%%%%%%%%%%%%%%%%%%%%%%%%%%%%%%%%
%% Content Copyright (c) 2013 Benjamin Lubin, Jesse Shore, Vatche Ishakian
%% All Rights Reserved
%%%%%%%%%%%%%%%%%%%%%%%%%%%%%%%%%%%%%%%%%%%%%%%%%%%%%%%%%%%%%%%%%%%%%%%%%%%%

\section{Results}
\label{sec:results}

We next describe several experiments we have conducted to find
approximately optimal graphs according to our spectral design
framework.

\subsection{Properties of Spectrally Designed Communication Networks}

Figure \ref{fig:clearExamples} shows two examples of networks produced
by our framework, with the weakest ties omitted for visual clarity.
Several features are immediately apparent. As expected, these networks
have a modular structure, with strong intra-team connections.
Additionally, there are weak ties connecting the teams in patterns
that appear in the visualization as ``fans.''  Intuitively, one could
think of these fans as ties from one representative of a team to
(usually) all the members of another team --- more of a ``liaison''
than a broker \citep[e.g.][]{burt2001structural}.  We are not aware of
graphs similar to our results appearing previously in the literature
on network structure.

The global disposition of the inter-team liaisons has a definite
structure, suggestive of a hierarchical ``spiral'' in the
visualizations.  In the right hand side of
Figure~\ref{fig:clearExamples}, the central team has three
``outgoing'' liaisons; the team to the right has two; the team at the
top has one; and the team at the bottom has none.

Interestingly, although our design criteria do not include hierarchy
or centralization, our results nonetheless show these features.  With
three out-going liaisons, the central team is at one end of the
hierarchy, perhaps at the top.  In this team, there is a leader-like
individual who is not a liaison to any other team.  However, this
individual has two singleton weak ties to individuals in other teams,
shortening the ``leader's'' paths to much of the rest of the network.

\begin{figure}[tb!]
  \centering
  \includegraphics[width=.5\textwidth]{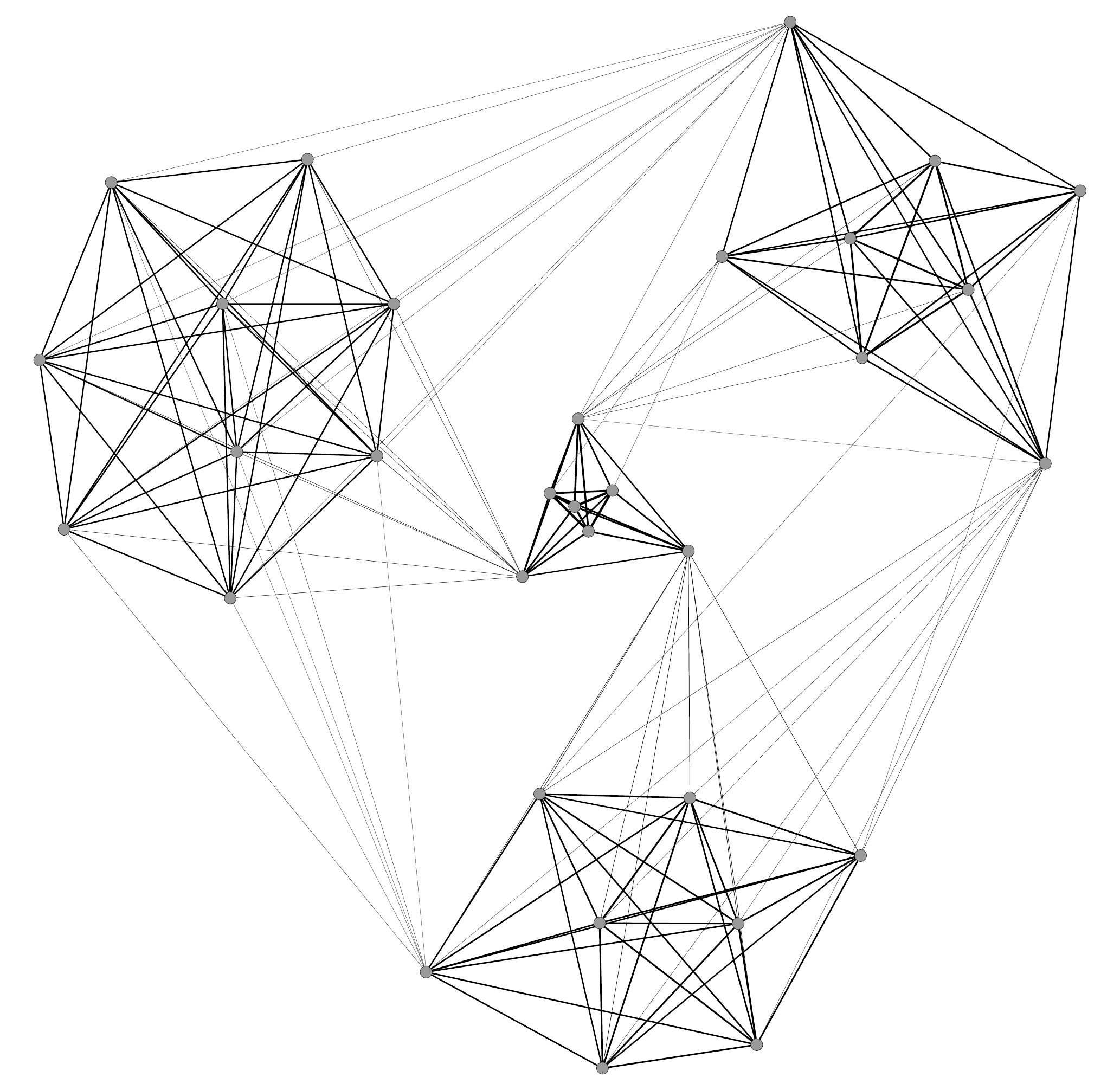}
  \caption{A 32-person network with \(\ell=4\) and \(m=0.15\) }
  \label{fig:32-28}
\end{figure}

Results for 32-person networks are similar to those for 16-person
networks. Figure \ref{fig:32-28} shows a network comparable to the
16-person network on the right-hand side of Figure
\ref{fig:clearExamples}, displaying the same hierarchical spiraling
structure.

\subsection{Optimality of Results}

\begin{table}[tbh!]
\centering
\caption{Provable achieved optimality, varying $n$, $\ell$
  and $m$.  The optimality column is calculated by dividing the
  objective value of the best solution we found by the theoretical
  upper bound described in section~\ref{sec:optBound}.  Thus, we are
  reporting a lower bound on the optimality of our solution.}
 \label{tab:optVal}
\small
\renewcommand{\arraystretch}{.9}
\begin{tabular}{rrrr}
  \hline
  $n$ & $\ell$ & $m$ & optimality\\ 
  \hline
  16 & 6 & 0.15 & 0.938 \\ 
  16 & 6 & 0.20 & 0.938 \\ 
  16 & 6 & 0.25 & 0.938 \\ 
  16 & 5 & 0.15 & 0.917 \\ 
  16 & 5 & 0.20 & 0.919 \\ 
  16 & 5 & 0.25 & 0.922 \\ 
  16 & 4 & 0.15 & 1.000 \\ 
  16 & 4 & 0.20 & 1.000 \\ 
  16 & 4 & 0.25 & 0.999 \\ 
  32 & 8 & 0.15 & 0.937 \\ 
  32 & 8 & 0.20 & 0.939 \\ 
  32 & 6 & 0.20 & 0.975 \\ 
  32 & 4 & 0.15 & 0.984 \\ 
  32 & 4 & 0.20 & 0.978 \\ 
   \hline
\end{tabular}
\end{table}

% Moved early
\begin{figure*}[tbh!]
  \centering
  \includegraphics[width=1.00\textwidth]{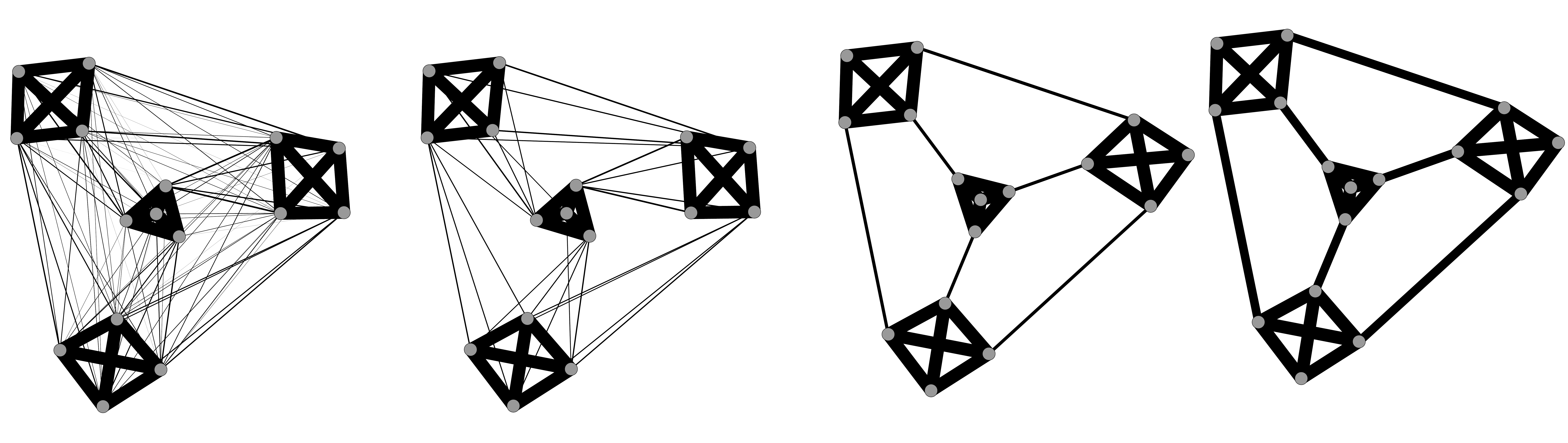}
  \caption{Four related networks, from left: 1) the full result of the
    optimization algorithm, 2) the result truncated such that the weakest
    links are omitted, 3) a hand-constructed network with brokers with
    single ties between teams and with weight equal to the sum of a
    liaison's weak ties, 4) a hand constructed network with stronger ties
    between cliques.}
  \label{fig:fourNets}
\end{figure*}

Table \ref{tab:optVal} gives the best objective values we achieved,
divided by the theoretical upper bound calculated for each set of
parameter values, as described in section~\ref{sec:optBound}.  Each
data point represents the best of 2000 independent random starting
points of our algorithm, obtained by running each in parallel on a
100 node computational cluster.  Each such pass through the algorithm
generally completes in less than 2 hours of CPU time on modern
Xeon-class hardware.
%
%total computation across all of our experiments represents some 75,000
%CPU hours
%
From the table we see that our algorithm is finding answers that are
very near our bound in most of the cases.  Where some gap remains, it
is unknown if this is due to the bound being loose for these
combinations of parameters, or the algorithm failing to find a
sufficiently global optimum.

\subsection{Spectral Impact of Inter-Team Connection Types}

In this section we compare our results with similar and simpler
networks.  Before doing so, however, we address the following
methodological detail: our algorithm's output is doubly stochastic,
but other networks are not necessarily so, preventing direct
comparison of their spectra.  In order to address this, we
re-normalize by iteratively row-normalizing then averaging the
resulting matrix with its transpose until we have achieved
double-stochasticity.

\subsubsection{Full versus Truncated Result}  

As noted above, the visualizations in Figure \ref{fig:clearExamples}
truncate the algorithm output such that the very weakest links are not
drawn.  In addition to making the structure of the networks more
apparent to the eye, simplified versions of the full result would
certainly be easier to implement in practice.  The two left-hand
networks in Figure \ref{fig:fourNets} visualize the difference between
the full and truncated results.

\begin{figure*}[tbh!]
  \centering
  \includegraphics[width=.60\textwidth]{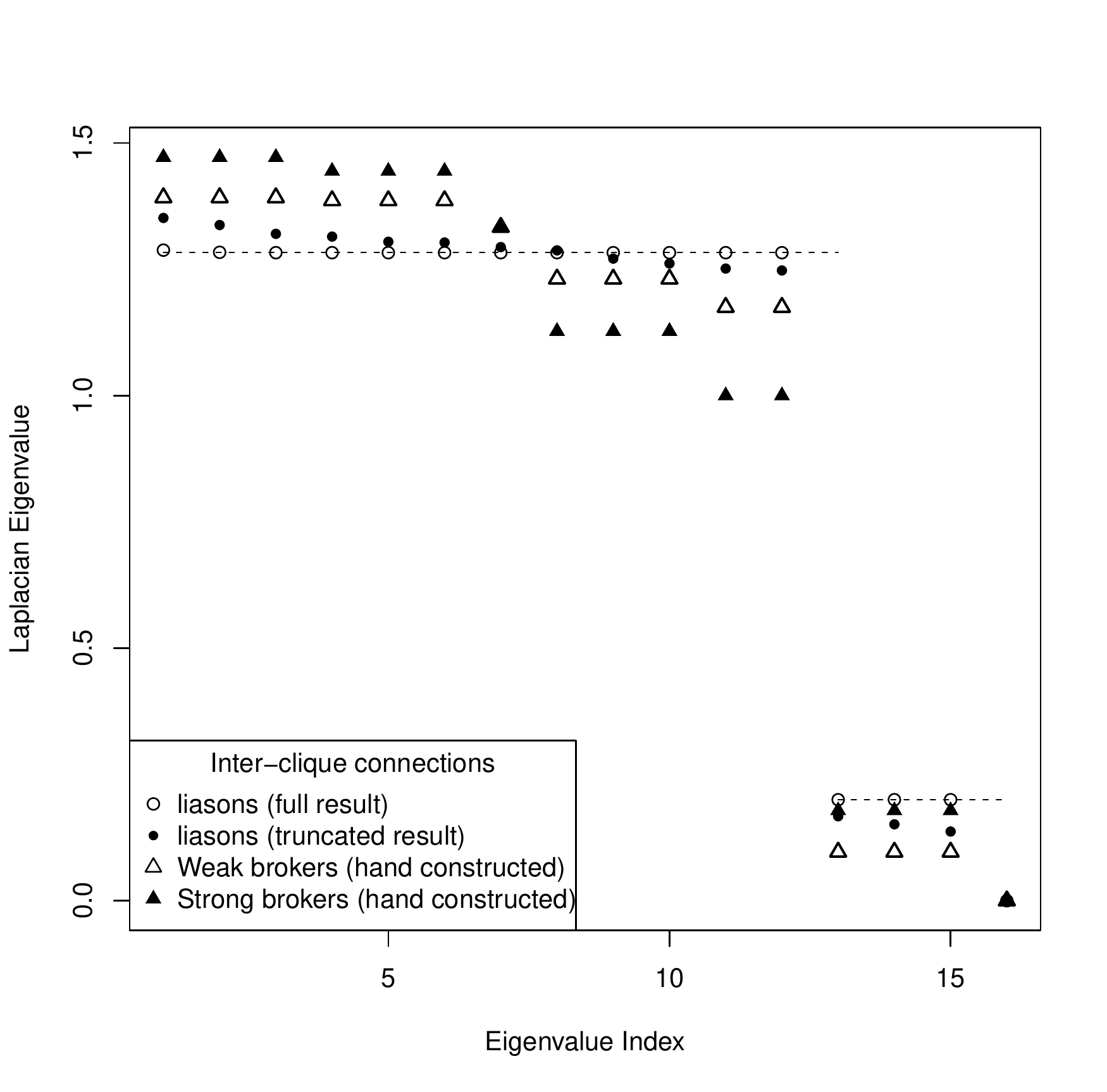}
  \caption{Comparison of spectra of related graphs in
    Figure~\ref{fig:fourNets}.  N.B. for the truncated version of the
    results and the hand-constructed matrices, weights are
    re-normalized for double-stochasticity.  The dotted line
    represents the bound described in section~\ref{sec:optBound}.}
  \label{fig:brokersVliaisons}
\end{figure*}		

Figure \ref{fig:brokersVliaisons} shows the effect of such truncation
on the spectrum.  The full solution is essentially optimal; truncation
produces a slightly sub-optimal spectrum, but the deviation is
relatively minor.  The right-hand side of the truncated spectrum shows
that the mixing rate is lower than the full result, and the left hand
side shows slight deviation from maximum modularity.  In sum, although
truncation moves us a step from the theoretical optimum, it is a small
step and a more pragmatic alternative where implementation is
important.

\subsubsection{Liaisons versus Brokers} 

Literature on social networks has tended to focus on the role of
brokers in organizations \citep{burt2001structural}, rather than the
liaisons we describe here.  In order to compare the spectra of
networks connected by brokers, rather than liaisons, we
hand-constructed two such networks.  In both cases, we replaced
inter-team liaisons with brokers.  In the first (third from the left
in Figure \ref{fig:fourNets} and the open triangles in Figure
\ref{fig:brokersVliaisons}), we set the weight of the inter-team ties
to be equal to the sum of the weights in a liaison's weak ties.  In
the second (right-hand side of Figure \ref{fig:fourNets} and the
filled triangles in Figure \ref{fig:brokersVliaisons}), we set all
ties to equal weight before normalizing for double-stochasticity,
resulting in stronger brokerage ties.

We find that the network with stronger brokers has a similar mixing
time to the full result of our algorithm, but it is much further from
optimum modularity than our result.  The network with weaker brokers
(equal to the weight of the liaison's ties) has slower mixing time
than our result, but it is closer to maximum modularity than the
network with strong brokers.

Overall, we observe that for a given rate of mixing, brokers produce a
less modular network than liaisons.  Alternatively, for a certain
amount of weight on inter-team ties, liaisons achieve a faster mixing
rate than brokers.

%%%%%%%%%%%%%%%%%%%%%%%%%%%%%%%%%%%%%%%%%%%%%%%%%%%%%%%%%%%%%%%%%%%%%%%%%%%%
%% Content Copyright (c) 2013 Benjamin Lubin, Jesse Shore, Vatche Ishakian
%% All Rights Reserved
%%%%%%%%%%%%%%%%%%%%%%%%%%%%%%%%%%%%%%%%%%%%%%%%%%%%%%%%%%%%%%%%%%%%%%%%%%%%
\section{Discussion}
\label{sec:discussion}

%\subsection{outline}
%\begin{itemize}
%  \item Structural Results: Liaisons
%  \begin{itemize}
%    \item 'Liasons' increase modularity while maintaining mixing
%    \item Hierarchy can also be interpretted from the results
%    \item How do you realize these networks in practice?  IT
%      Implementation
%    \item Specify whole network, vs. specifying rules that let the
%      network evolve into something with our structure.
%  \end{itemize}
%
%  \item Framework  and future work
%  \begin{itemize}
%    \item Structural tradeoffs can be balanced with programming
%      approach with spectral objectives and constraints
%    \item future work, extensions to framework
%    \item behavioral study
%    \item Experiment to look for graphs with our structure in the
%      field, and compare?  Note concrete example (e.g. ``liaison
%      officer'')
%  \end{itemize}
%
%  \item Conclusion 
%  \begin{itemize}
%    \item A roll for design in networks.
%  \end{itemize}
%
%\end{itemize}

\subsection{Spectrally Specified Structures}

Through the spectral optimization technique we developed, we have
created networks with novel structural features.  In particular, what
we call ``liaisons'' --- individuals with strong connections in one
team and weaker connections to multiple (usually all) members of
another team --- maximize modularity, while maintaining a high degree
of inter-team connectivity.

\begin{figure}[tb]
	\centering
        \includegraphics[width=.25\textwidth]{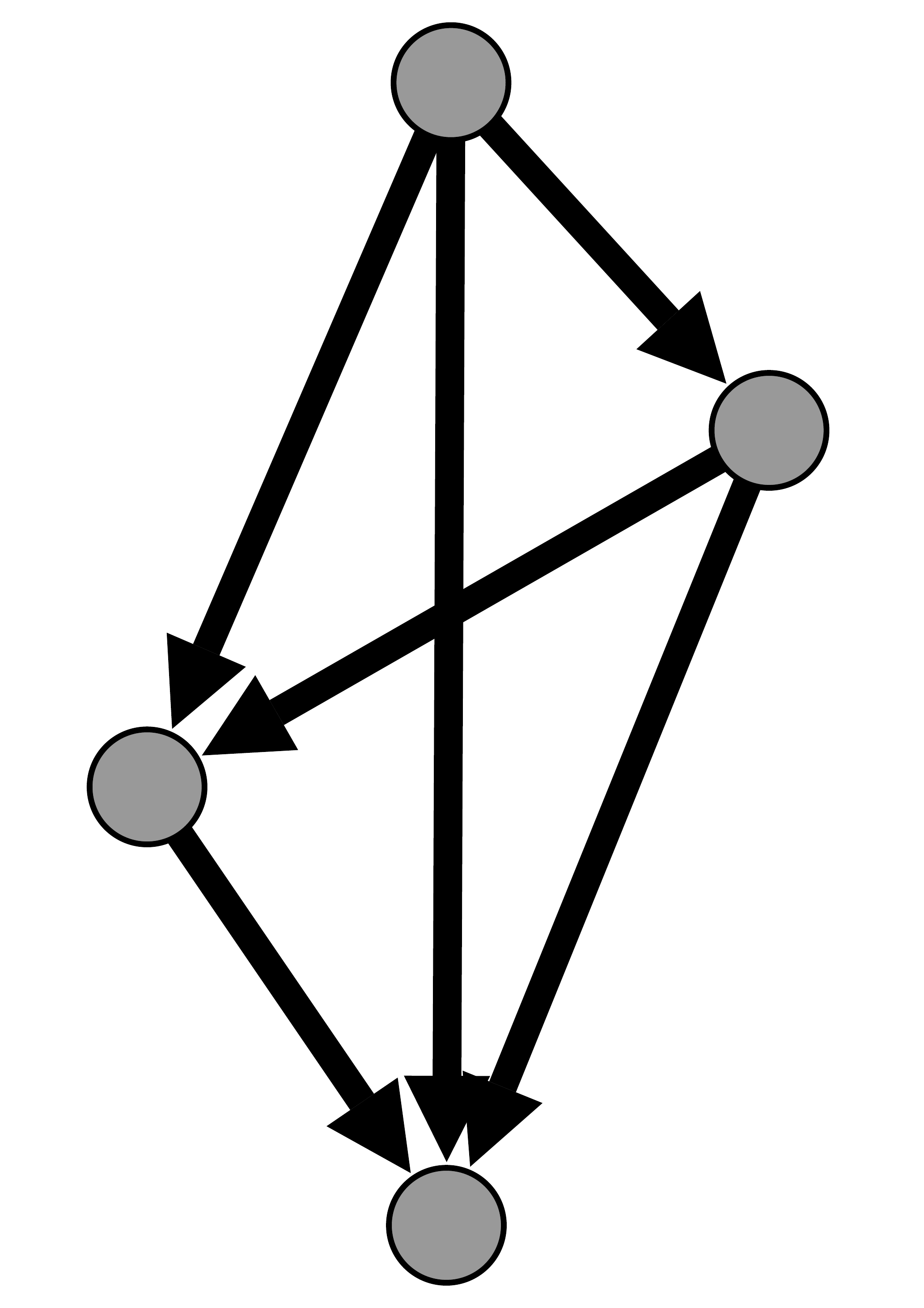}
	\caption{A simplified version of the right-hand graph in
          Figure~\ref{fig:clearExamples}, in which we have replaced
          cliques with nodes, and liaison membership in a secondary
          group with directed edges.  The network now forms a directed
          acyclic graph with the previously central module now at the
          root.}
	\label{fig:dag}
\end{figure}

We can elucidate the global structure of these networks, by
constructing simplified representational graphs.  To do this, we
collapse each connected clique to a single node and replace each
``fan'' of liaison edges by a single directed link.
Figure~\ref{fig:dag} shows the result of this process on the original
graph on the right side of Figure \ref{fig:clearExamples}, revealing
that the resulting structure is a directed acyclic graph, that is, a
strict hierarchy.
Accordingly, the networks admit interpretation such as the following.
The top of the hierarchy could be the leadership team.  At the other
end of the hierarchy, at the bottom of the figure, is a team that
receives representatives from all the other divisions, suggesting a
function depended on by all: perhaps an infrastructure or operations
team.

These communication structures are finely articulated, but this need
not present a barrier to implementation on a computer mediated
communication platform. One plausible implementation of tie strength
would be as a fraction of the problem solving time spent ``together,''
with the opportunity to exchange ideas or observe the progress of
others. On such a platform it would also be easy to tune the
importance of the weaker ties to increase either the speed of
information diffusion (with stronger ties), or the importance of
separate teams (with weaker ties), even over time to respond to the
collective progress within the network.

Thus, in platforms with a high degree of control over participant
ties, the network structures proposed here can be directly imposed on
them.
Alternatively, rather than implement the network directly, we could
follow a path inspired by \textit{mechanism design}
\citep[e.g.][]{myerson1988mechanism}, and create rules or technologies
by which self-interested individuals would choose to construct
spectrally optimal networks themselves.
We leave it to future work to determine which rules would achieve
these structures, and to analyze their dynamics, equilibria, and
properties.

\subsection{Spectral framework}

Beyond our argument for network design in general, we have identified
a framework for performing such design.  By building our framework
around the graph Laplacian spectrum, we have enabled the targeting of
many key network properties.  These include not only the modularity
and mixing time we have focused on here, but also degree distribution,
hierarchical structure, and many other graph properties.  Thus, by
defining objectives and constraints spectrally, and by providing a
non-convex optimization method for realizing graphs with such spectra,
a whole family of design problems become solvable.  Accordingly, there
are several further computational experiments that future research may
wish to investigate, and we highlight a few here.

In this paper, we assume that all individuals have equal
communications capacities, but in the real world, some people are more
capable than others.  To capture this variation, future experiments
could drop the requirement that networks be representable by
doubly-stochastic graphs, allowing some people to carry more
communicative weight.  Further, with suitable additions to the
optimization formulation, the matrix symmetry constraint could be
relaxed in order to investigate directed graphs in similar settings.

In this paper, we have looked at graphs with a single large step in
their spectra.  However, by including more than one step, graphs with
more than one level of hierarchical structure would be created.
Future work could investigate such graphs by including additional
spectral objective terms in the formulation presented here.

Of course, design work is only as good as the theory that supports it.
Design results should be tested experimentally using real human
problem solvers to assess their performance in crowdsourcing tasks and
knowledge management settings, relative to alternatives from the
literature.  Field work could also seek out empirical evidence that
deepens our understanding of how organizations that are similar to
networks proposed by our design framework operate in practice.
Relevant to the specific results we present here, for example, would
be study of organizations such as university administrations or the
United States government that utilize ``liaison officers'' to link
multiple divisions or teams together.  All of this further evidence
would be invaluable in informing future improvements to design
methodology.

\subsection{Conclusion}

Beyond describing networks, many practical settings call for designing
them.  For networks of humans, the benefits of network modularity have
been been well documented in research on networked problem solving,
above and beyond the well-explored benefits of short average path
lengths between all members of an organization.  However, prior work
on network design has not incorporated these insights. Our
contribution has been to fill this gap, drawing connections between
research on networked problem solving, spectral graph theory, and
non-convex optimization to both construct a design methodology and use
that methodology to generate novel structures for communication
networks.

\bibliographystyle{plainnat} 
\bibliography{inveigen} 

%%%%%%%%%%%%%%%%%%%%%%%%%%%%%%%%%%%%%%%%%%%%%%%%%%%%%%%%%%%%%%%%%%%%%%%%
\end{document}